\begin{document}
\title{Privacy Protection for Mobile Cloud Data: A Network Coding Approach
	}

\author{
Yu-Jia Chen,~\IEEEmembership{Student Member,~IEEE} and Li-Chun Wang,~\IEEEmembership{Fellow,~IEEE,}
\\National Chiao Tung University, Taiwan
\\Email: allan920693@nctu.edu.tw and lichun@cc.nctu.edu.tw}


\maketitle

\begin{abstract}

Taking into account of both the huge computing power of 
intruders and the untrustedness  of cloud servers, we develop an enhanced secure pseudonym scheme  to protect the privacy of mobile cloud data.
To face the huge computing power challenge, we develop an unconditionally secure light-weight network coding pseudonym scheme.
For the privacy issue of untrusted cloud server, we further design a two-tier network coding to decouple the stored mobile cloud data  from the owner's pseudonyms.
Therefore, our proposed network coding based pseudonym scheme can simultaneously
defend against attackers from both outside and  inside.
We implement our proposed  two-tier light-weight network coding mechanism in a group location based service (LBS) using untrusted cloud database.
Compared to computationally secure Hash-based pseudonym, our proposed scheme is not only unconditionally secure, but also can reduce more than 90\% of processing time as well as  10\% of energy consumption.

\end{abstract}

\begin{keywords}
Privacy protection; big data and cloud computing; network coding; location based services.
\end{keywords}


\IEEEpeerreviewmaketitle

\section{Introduction}
Data collected from the sensors embedded in smartphones offer great commercial potential for mobile cloud services, but also pose new challenges on privacy protection.
With the properties of continuous changing and updated over time,  mobile data are particularly valuable for big data analytics to understand and predict each individual behaviors. Because mobile data provide highly personal information, the privacy issues of mobile cloud data raise a lot of concerns \cite{reilly2013mobile}.
However, using current security techniques to protect mobile cloud data privacy
will face a number of new challenges in protecting mobile privacy \cite{yu2016big}.
First, we can no longer rely on a protection mechanism with computational security assuming that sophisticated cipher cannot be  broken easily by attackers. This is because the attackers become more powerful in the era of cloud computing.
Second, mobile devices with limited computing power cannot conduct complex encryption and decryption algorithms. Third, private mobile cloud data are stored in public cloud servers, which increases the opportunities to be attacked by other cloud tenants.

To defend against malicious attackers with huge computing power  in outsourced database (ODB), we propose an unconditionally secure network coding based pseudonym scheme to protect mobile privacy against the following two security threats.
First,  for the outsider security threat,   the ownership information of mobile data  is protected to  the highest unconditional security level, in which the ciphertext  will not be broken even with infinite computation time.  Since the inherent distribution nature of network nodes is integrated into network coding, it is impossible to decipher the complete plaintext  unless all the nodes are attacked.
Secondly, for the insider security threat from the attackers staying in the same ODB, we develop a two-tier network coding technique for decoupling data ownership.
As a result, the insider attackers cannot know  the relationship of the stored data and their owners at all.

To demonstrate the advantages of our proposed network coding privacy protection scheme, we implement it for a group location based service (LBS) with ODB \cite{chen2014sudo}.
A group LBS can  help a group of users share their
locations.  Fig.~\ref{f01} shows the proposed security system model for group LBS.
The privacy issues of providing such a group LBS can be classified into the following three aspects:
\begin{itemize}
\item Authenticating users' data while preventing
an untrusted ODB provider from knowing the relation of a user and his/her stored data \cite{khan2012anonymouscloud};
\item Sharing data  with multiple members in the group of interest while maintaining
the same protection level as the two user case  \cite{Lou_03};
\item Storing data in a  multi-tenant  cloud computing environment with the threat of attackers
staying in the same ODB \cite{zhu_03}.
\end{itemize}


\begin{figure*}
\begin{minipage}{1\linewidth}
\centering
\includegraphics[width=0.9\textwidth]{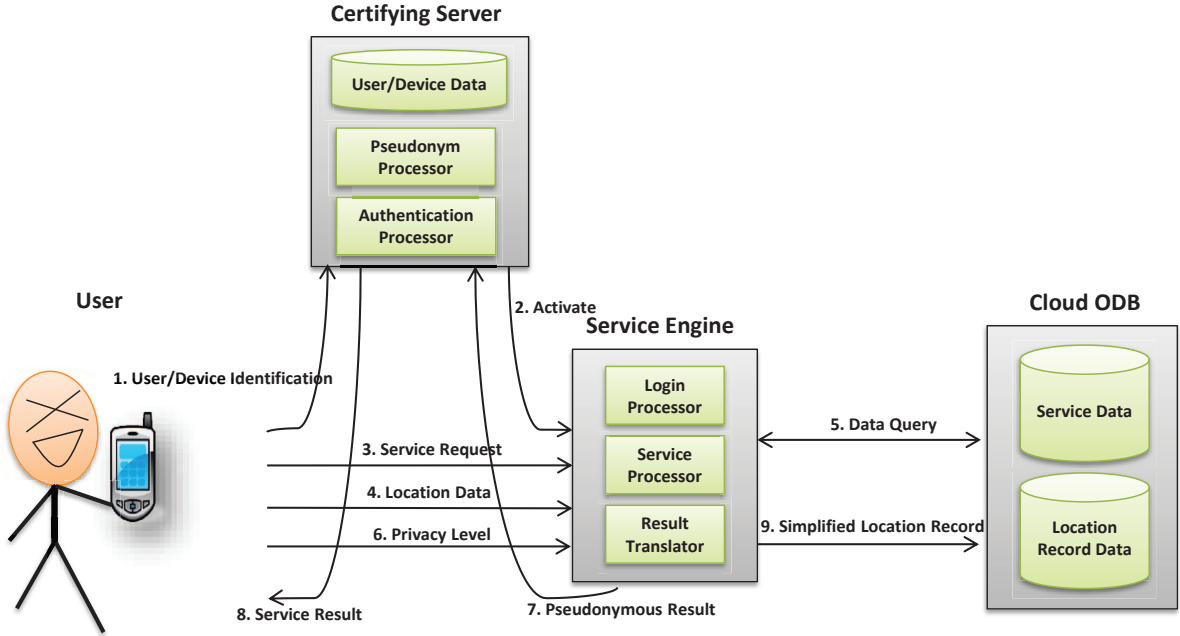}\\
\caption{The proposed security model for a LBS system with  outsourced database (ODB).}
\label{f01}
\end{minipage}
\hfill
\end{figure*}



An unconditionally secure  framework for mobile cloud applications has been rarely seen in the literature. Our proposed unconditionally secure network coding based pseudonym scheme
can overcome the aforementioned three challenges, and has the following two major contributions:
\begin{itemize}
\item A data ownership decoupling mechanism is presented, which
can  achieve unconditional security level and resolve the security concerns for
untrusted ODB.
\item  A light-weight network coding  scheme is proposed  which can reduce the processing time by 90\% compared to the Hash-based pseudonym.  A user only needs to encode his/her identity one time when logging in to the trusted certifying server.
The remaining operations for  pseudonym changing and uncloaking are performed by the trusted certifying server. None of decoding operations are needed in mobile terminals.

\end{itemize}

The rest of the paper is organized as follows. Section II discusses the related work for location privacy protection. In Section III, we explain the system security model for group LBS using ODB. In Sections IV and V, we present the proposed network coding based pseudonym and analyze its security performance. In Section VI, we discuss the issues of data ownership, user authentication, and service continuity. Section VII compares the performance of the proposed network coding based pseudonym with Hash-based pseudonym. Finally, concluding remarks are given in Section VIII.

\section{Related Works}
In the literature,  location privacy is generally protected  by either pseudonyms or anonymization.

\subsection{Pseudonym-based Location Privacy Protection}

Pseudonym techniques can protect  location privacy by disconnecting  a user's location data with his/her genuine identity \cite{Beresford_09, singelee_2006, del2011privacy, zhu2013toward, liu2013game, lu2012pseudonym, mathews2014effective}.
Existing pseudonym schemes can be classified into two types.
The first type of pseudonym techniques aim to change the pseudonyms intelligently so that the attacker cannot distinguish the spatial difference from other members.
The authors of \cite{Beresford_09} proposed to frequently change pseudonyms based on the mixed-zone concept, where users' pseudonyms are mixed together.
In \cite{zhu2013toward}, disposable interface identifiers were frequently assigned to users.
When a node is not allowed to disclose addresses, a silence period (defined as a transition period between changing pseudonyms) was  introduced \cite{lu2012pseudonym}.
When users' pseudonyms in some area are changed after a silence period, a hacker could not find the target user.  In \cite{liu2013game}, the pseudonym mechanism was improved by using a Bayesian approach to resolve the frequency changing issue. In \cite{freudiger2013non}, a  game-theoretic model for the non-cooperative behavior analysis of  mobile nodes in a mixed zone was proposed. A location proof updating system for vehicular networks was designed in \cite{zhu2013toward}.  In \cite{ying2013dynamic}, the authors proposed a dynamic mixed zone of which the size is determined by the vehicle's predicted location, traffic statistics, and privacy requirement.

Another type of pseudonym techniques focus on designing the unlinkable and irreversible pseudonym based on cryptographic method. The Hash function is widely used to generate a unique and secure pseudonym \cite{mathews2014effective}.
However, it has been shown that attackers with computational power provided by cloud computing can break Hash-based cryptography \cite{teat_11}.
The brute force method can deduce the original message from a Hash value \cite{kumar_2013}.
Furthermore, existing pseudonym techniques cannot completely protect data ownership if an ODB provider can link a particular pseudonym with its stored data in ODB during an authentication process.

\subsection{Anonymization-based Location Privacy Protection}
Anonymization techniques can protect location privacy by generalization and suppression techniques, i.e., with an expression of  lower granularity. In \cite{Gruteser_04} the $k$-anonymity concept was introduced so that a user's location cannot be distinguished from other $k-1$  users' locations. Instead of reporting the exact location, a user sends a region containing the locations of other $k-1$ people. In \cite{Gedik_17}, the CliqueCloak algorithm was proposed to provide different $k$-anonymous requirements for each user.  A privacy-preserving architecture for LBS with different anonymization techniques was suggested in \cite{Yin_19}. A node density-based location privacy scheme was proposed to improve anonymity without degrading quality of service (QoS) \cite{miura2013hybrid}. The authors of  \cite{gurjar2013cluster} proposed  a cluster based anonymization scheme to replace the real node identities with random identities generated by the cluster heads.
In \cite{ying2014protecting}, a clustering anonymization for vehicular networks was developed to hide the road and traffic information. However, all the above anonymization techniques cannot protect location data shared by a group of users since an individual's location is concealed.

\subsection{Objective of This Paper}

The objective of this paper is to improve location privacy protection to unconditional security level and prevent an ODB provider from linking a particular pseudonym with its stored data.
To our knowledge, privacy protection for group LBS in the context of ODB
has not been fully investigated. We suggest to adopt network coding for pseudonym  generation, which  has received little attention in the literature so far.

Network coding is a generalized routing method in which messages are computed by intermediate nodes with algebraic encoding.
Previous works have shown that network coding can provide robustness \cite{koetter2003algebraic, 6134039, cai2002network} and improve throughput \cite{6134039,zeng2014throughput} as well as confidentiality \cite{vilela2008lightweight}.
More recently, network coding is employed to improve the security and reliability of distributed storage systems such as recovering lost data in multiple storage nodes \cite{chen2014nccloud}, preventing eavesdropping over untrusted networks \cite{cheneavesdropping} and checking integrity of outsourced data \cite{chen2014enabling}.

Although it was proved that network coding possesses the properties of irreversibility  \cite{oliveira2012coding} and verifiable data integrity \cite{chen2016secure}, a stronger security property of network coding that can achieve unlinkability between coded data has been rarely studied yet. In our previous work \cite{Chen_11}, the network coding based pseudonym scheme and the corresponding security mechanism for group LBS using ODB was introduced, but the security analysis of the proposed mechanism is not yet investigated.

To this end, we develop a two-tier coding method to mix user identity with watchword/seed and then generate two keys for certification and anonymization.
In this paper,  on top of the work in  \cite{Chen_11}, we further analyze the security performance of the proposed  pseudonym scheme.
We will prove that network coding can provide unlinkable pseudonyms to unconditional security level for group LBS by analysis.
It will be demonstrated that the proposed two-tier network coding method can
preserve the privacy of data ownership even if an untrusted ODB provider has huge computing power.

\section{System Model} \label{System_model}

\subsection{Location Based Service (LBS) Applications Scenario}

\begin{figure}
\centering
\includegraphics[width=0.5\textwidth]{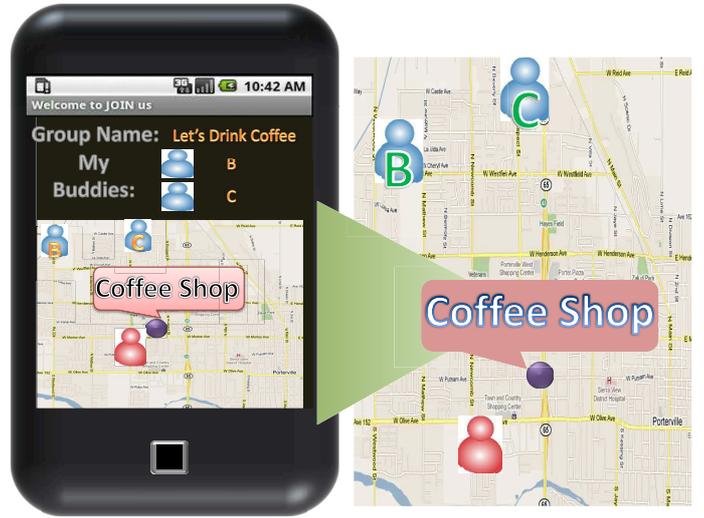}\\
\caption{ An illustrative example of a group LBS for scheduling a meetup.} \label{f02}
\end{figure}
As mentioned before, the group LBS is considered as an example to investigate the security performance issues for mobile cloud applications.
With users' interests as well as their locations, LBS system  establishes a ``group" of users with the same interests to help customers' meetup activities.
We  implemented a location-based group scheduling service on the ODB model, called JOIN \cite{Lee_05}, which can easily  arrange a meetup  by
sharing their current  available time and locations for polling, voting, and broadcasting.
Fig. \ref{f02} illustrates a service scenario of JOIN, in  which
users A, B and C  join the  ``Let's drink coffee" group.
While user A is approaching to his preferred coffee shop, users B and C   are nearby coincidentally.
 JOIN server  will provide the proximity information of  its group members users B and C to customer A, and the advertisements for  this coffee shop.
The locations and activities of customers in this meetup will be stored in the cloud database for future commercial promotions, such as group coupon distribution.

In this group LBS  application scenario, it is worthwhile investigating the following security  issues. First,  customers shall release  their identities to the service provider and to other group members. During the sharing process, it is likely that a hacker will eavesdrop the location data and their identities.  Moreover, due to the use of ODB,  an adversary can possibly pretend to be an authorized user and then modify the customers' private information. Clearly,  a more trustworthy LBS provider needs to protect  customers'  identities and location data in a more
sophisticated security mechanism.

\subsection{Security Models and Assumptions}

Figure \ref{f01}  shows the proposed security model for a mobile cloud system with ODB,
consisting of users devices,  a service engine, cloud databases, and a trusted certifying  server.
The service engine  contains  three major processors: the login processor, service data processor (e.g., scheduling a meetup), and service result translator (e.g., anonymization).
The cloud database stores the service  information  and location data.
Rather than in  cloud databases, customers' data are stored at the certifying servers, consisting of authentication processor and pseudonym processor.
Referring to Fig. \ref{f01}, the general information flows of privacy preserved LBS are described as follows.
\begin{enumerate}
\item  Step 1: A customer sends his/her identification as well as the device identification to the certifying server. Then the authentication processor within the certifying server  verifies customers' identities.

\item Steps 2 and 3: The login processor is activated to respond the service requests of customers.

\item Steps 4 and 5: The customer sends the location data to the service engine. Then the service processor can access other service-related information from the cloud database, such as the addresses of the nearby coffee shops.

\item Steps 6 and 7: After receiving the privacy level requested by the customer, the result translator sends the pseudonym to the certificate server.

\item Step 8: The pseudonym processor  ``uncloaks" the pseudonymous result.

\item Step 9: The simplified location records are stored in the cloud database with various privacy levels.
A low privacy level could show group members at the scale of street locations, whereas the high privacy level could  show your friends only at the scale of a city.
\end{enumerate}

 To make the privacy  analysis more tractable, the considered JOIN group LBS is implemented in an experimental platform, invoking the following assumptions and conditions:
 \begin{itemize}
 \item  A group of users trust each other so as to being willing to share their locations and identities.
 \item  The service provider and the cloud database provider are semi-honest \cite{chai2012verifiable}.  Hence, these providers shall follow the protocol properly, while allowing them to  keep all the records of  involved computations and data exchanges. \end{itemize}
Noteworthily, under the semi-honest requirement, there still exists a possibility  of deriving private data of other parties from the intermediate computation records.



\section{IMSI-Based Group Security (IGS) Algorithm}\label{IMSI-BASED GROUP SECURITY (IGS) ALGORITHM}

\subsection{Algorithm Design}

In this subsection, we  present network coding based pseudonym integrated with the international mobile subscriber identity (IMSI).
Stored in a subscriber identity module (SIM) card, IMSI  is a unique number associated with  current mobile phones to identify the subscribers.
Current Android-based mobile platform provides the application programming interface (API) to access the IMSI number of mobile phones.
Clearly, location data themselves are not quite useful as long as  a hacker does not know the
ownership of these data.
Hence, we propose to encrypt IMSI  to conceal a customer's privacy.


Fig.~\ref{3} (a) and (b) illustrate the basic idea of the proposed IMSI-based group security (IGS) algorithm, in which a two-tier coding scheme is developed to generate KeyA  for authenticating a legal customer's identity,  and KeyB (pseudonym) for protecting customer's private data (e.g., location).
First, KeyA is generated by mixing customers' IMSI and customers' watchwords,
which are  selected by customers as their secrets to protect their IMSI information.
Watchwords shall not be revealed to any party.
Second,  mixing each customer's KeyA and a uniformly distributed random seed results in
KeyB. Note that the length of watchwords and random seed should be
longer  than or equal to the length of the IMSI.
The key generation functions  in the proposed two-tier coding scheme
shall satisfy the following two requirements even if the key generation function is revealed:	(1) difficult to obtain any information from KeyA or KeyB; (2) difficult to obtain the correspondence between KeyA and KeyB. We consider two  kinds of  key generation functions in this paper.
\begin{itemize}
\item  {\it Hash function}:   It has been used extensively
in the applications of  message integrity checking and authentication, such as message digest (MD5) and secure Hash algorithm (SHA).

\item   {\it Network coding}: In efficient routing  and   secure storage schemes \cite{oliveira2012coding}, network coding technique has been adopted.
Vandermonde transform matrix is the key element in network coding.
\end{itemize}
Vandermonde matrix is used in our proposed  two-tier network coding scheme.
Denote $\mathbf{A}$ as an n$\times$n Vandermonde matrix, where $[\mathbf{A}_i,_j]=(a_j^i-1)$ and  the coefficients $a_i$ are distinct nonzero elements in a finite field $F_q$ , $q=2^u>n$.  Let $\mathbf{b}=(b_1,\dots,b_n)^T$ be the mix of IMSI and watchwords.  Next, we compute $\mathbf{c}=(c_1,\dots,c_n)^T=\mathbf{A}\mathbf{b}$ and randomly select the segment of $\mathbf{c}$  as the key value in the two-tier network coding scheme.


The aforementioned key generation functions can be adopted in our pseudonym generation scheme.
Fig.~\ref{3} shows the proposed two-tier coding
because a pseudonym in our scheme is generated from two sequential processes.
In the first tier,  IMSI and a watchword are mixed by the key generation function to yield a cipher pseudonym.
This can protect the data ownership privacy during an authentication process.
In the second tier, the cipher pseudonym of the first tier is mixed with random seeds
to result in the second cipher pseudonym.
We use the second cipher pseudonym to protect the data privacy
when sharing data with others and  storing data in an ODB.

\begin{figure}
\centering
\includegraphics[width=0.5\textwidth]{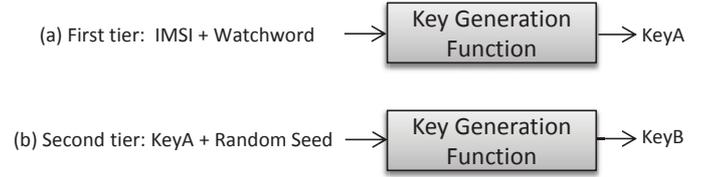}\\
\caption{ Proposed two-tier coding scheme with encrypted international mobile subscriber identity (IMSI), where KeyA and KeyB are used for authentication and  pseudonym generation, respectively. } \label{3}
\end{figure}

Now we detail the proposed IGS algorithm.
Initially, a customer sends his/her genuine identity and KeyA to the certifying server.
If KeyA is certificated, the certifying server generates an initial pseudonym (KeyB) by the input KeyA, and then activates the login processor.
Algorithm $1$ describes the detail procedures of the proposed pseudonym generation algorithm. Note that  the tolerance distance  and the silence period reflect the customers' preferred privacy level and QoS requirements.
The former  indicate the maximum acceptable location bias of a customer, and the later specifies  the  period of  changing pseudonyms \cite{Leping_15}.
The pseudonym processor exchanges KeyBs under the condition that  the pseudonym timer reaches the silence period and any other group member is within the range of the tolerance distance. Otherwise, a new KeyB will be regenerated.
Since the pseudonyms are mixed with  nearby members' pseudonyms,  the proposed IGS algorithm will impose ambiguity on  a customer's exact location and thus
protect his/her location privacy. Finally, a customer uses the generated KeyB  to update the current location.

\begin{algorithm}
\caption{Pseudonym (KeyB) generation algorithm}
\begin{algorithmic}[1]
\IF  {a customer sends the KeyA which is authorized by the certifying server}
           \STATE activate the login processor
           \STATE generate and send KeyB to the customer
           \STATE start a timer
\ELSE
           \STATE send a reject information to the customer

\ENDIF

\IF  {timer $\geq$ silence period}

           \IF {tolerance distance $\geq$ distance(customer, a randomly selected friend F within the same group) }
                \STATE exchange KeyB of the customer and F and also reset the timer
           \ELSE
                \STATE send a command of KeyB's changing to the customer and also reset the timer
           \ENDIF
\ENDIF

  \label{code:recentEnd}
  \end{algorithmic}
\end{algorithm}

\subsection{Proposed IMSI-based Group Secure (IGS) Algorithm }

In this subsection we detail the IGS algorithm and the  service procedures
when applying the proposed pseudonym generation algorithm in JOIN.
Fig.~\ref{4}(a) shows the register process.
Each customer should register a unique account.
Initially, a customer generates KeyA with his/her watchwords and transmits KeyA to the  server with password and genuine identity.
Fig.~\ref{4}(b) shows the login process of the JOIN services.
A process thread is initiated and retained after the verification of the identification and the password.
Since the thread is retained, the certifying server can identify the customer in the communication afterwards.
Furthermore, the certifying server generates an initial KeyB from KeyA and a random seed.
Then the KeyB is delivered to the customer.

Fig.~\ref{5}(a) shows the activity initiation process of the JOIN services,
where the dotted line and the solid line in this figure is used to
denote data stored in the memory and on the hard disk, respectively.
A customer can initiate a new activity (i.e., invites friends to go somewhere) and receive information of nearby friends by sending KeyB and location to the  server.
Then the server sends the requests to all other group members.
Each certificated member has to send his/her KeyB and locations to the server responding to this request.
Next the JOIN server searches the nearby friends from the location record table and provides some group information (e.g., the top 10 restaurants, coffee shops, or bakeries with fresh-baked bread) from the service data in the cloud-based ODB.
Then this group information as well as KeyB and the location of the nearby friends are sent to the certifying server.
The certifying server decodes KeyB with a genuine identity and generates a new KeyB based on the proposed IGS algorithm.
Finally, the customer receives the service result from the certifying server.

Fig.~\ref{5}(b) shows the storage process of the JOIN services.
The JOIN server stores each customer's KeyB and location in the cloud database.
Note that storing a less accurate location in the ODB can result in a higher privacy protection.
For example, storing New York City is more secure than storing Manhattan in the database.
Finally,  data in the memory will be cleared after the activity is finished.
\begin{figure*}
\centering
\includegraphics[width=0.6\textwidth]{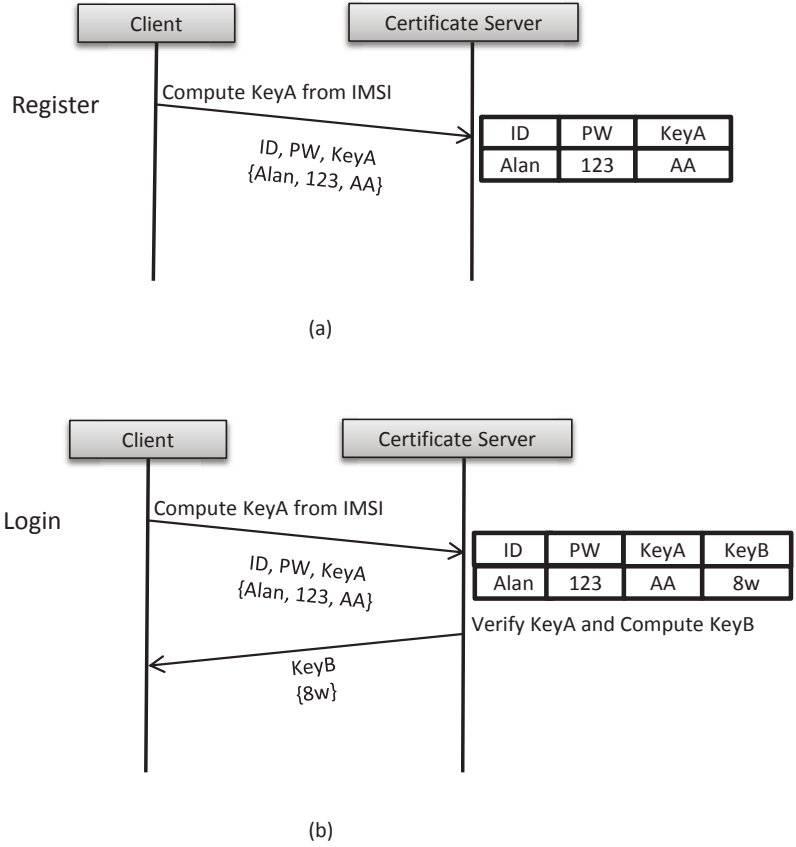}\\
\caption{Register and login process of the JOIN services.} \label{4}
\end{figure*}

\begin{figure*}
\centering
\includegraphics[width=0.8\textwidth]{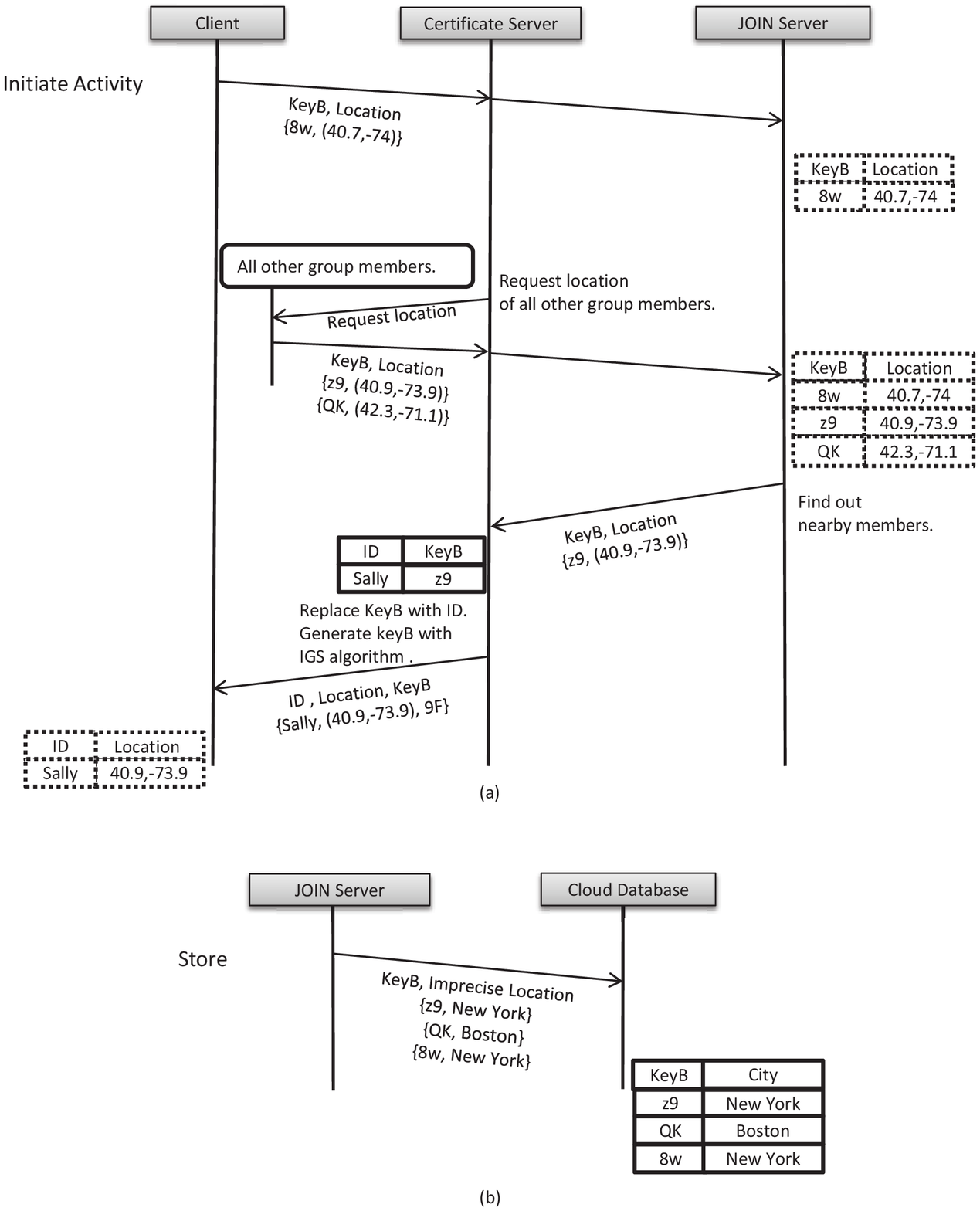}\\
\caption{Activity initiation and storage process of the JOIN services.} \label{5}
\end{figure*}

\section{Privacy Analysis}\label{SECURITY ANALYSIS}

\subsection{Security Problem Formulation}

In this subsection, we  will prove that the proposed network coding based pseudonym scheme can guarantee the unlinkability of locations and customers to the unconditional security level.
Although   the
irreversibility feature of network coding   can  prevent a hacker from deducing valuable information \cite{oliveira2012coding}, a hacker can still link the relation between the pseudonyms of the same customer.
Thus,  the unlinkability feature is desirable for a pseudonym scheme. First, the hardness assumptions for Hash functions are given. Then we present the information-theoretic security proof for the proposed  network coding pseudonym scheme. The security of  the  group LBS security model
is evaluated based on
the probability of a hacker being able to break into a security system.
The basic idea of our security system is to
use a customer's pseudonym to hide his/her identity.
The pseudonyms are  generated by encrypting the IMSI of the customer's device  through
either network coding or a Hash function.

Consider a scenario where adversary A can eavesdrop all the links on the Internet, and
know the detailed information of key generation functions (i.e., Hash function or Vandermode matrix $\mathbf{A}$). The goal of this adversary  is to find the correspondence between the customer's ID and his/her locations.
Since a customer's (ID, KeyA) and (KeyB, location) are stored in two separate storage nodes,  adversary A can break the system once  the relations of
(ID, KeyA) and (KeyB, location) are known.
A security problem for this kind of group LBS can be formulated as follows.


\newtheorem*{theorem-non}{IMSI Security Problem}
\begin{theorem-non} Given $k({A_{i}})$ and $k({B_{j}})$, determine whether $IMSI_{i}=IMSI_{j}$.
Here $k({A_{i}})$ and $k({B_{j}})$  represent the $i$-th customer's keyA and the $j$-th customer's keyB generated from the same key generation function $k()$, respectively. We denote the mixing of IMSI and watchwords as ${A_{i}}$ and the mixing of KeyA and random seeds as ${B_{i}}$.
\end{theorem-non}
Next, we derive the following theorem.
\newtheorem{theorem}{Theorem}
\begin{theorem}
Adversary A can find the correspondence between ID and location if and only if A can correctly solve $IMSI$ $Security$ $Problem$.
\end{theorem}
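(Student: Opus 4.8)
The plan is to establish the biconditional as a pair of reductions between the adversary's goal of recovering the ID--location correspondence and the ability to solve the $IMSI$ $Security$ $Problem$. I would begin by fixing the adversary's view: by eavesdropping and by the two-node storage model, adversary A holds a table of $(\text{ID}, k(A_i))$ pairs at the first node and a table of $(k(B_j), \text{location})$ pairs at the second node. Within each table the pairing is explicit, so what A lacks is only the cross-table matching; hence recovering the ID--location correspondence is exactly equivalent to matching each $k(A_i)$ with the correct $k(B_j)$. I would also record the one fact I rely on from the system model, namely that IMSI uniquely identifies a subscriber, so that two keys trace back to the same IMSI if and only if they belong to the same customer.

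For the sufficiency direction ($\Leftarrow$), I would assume A can solve the $IMSI$ $Security$ $Problem$ and exhibit an explicit procedure: for every $(\text{ID}, k(A_i))$ entry and every $(k(B_j), \text{location})$ entry, A feeds the pair $\big(k(A_i), k(B_j)\big)$ to the solver and retains those pairs for which it returns $IMSI_i = IMSI_j$. By the uniqueness of IMSI each such match links an ID with the location of the same customer, so iterating over the two tables yields the full correspondence. The cost is a quadratic number of oracle calls, which is irrelevant to a logical equivalence.

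For the necessity direction ($\Rightarrow$), I would assume A already knows the ID--location correspondence and reduce an arbitrary $IMSI$ $Security$ $Problem$ instance to a table lookup. Given $k(A_i)$ and $k(B_j)$, A locates the ID paired with $k(A_i)$ in the first table and the location paired with $k(B_j)$ in the second, then consults the known correspondence: the two share a customer precisely when that ID maps to that location, and by IMSI uniqueness this holds iff $IMSI_i = IMSI_j$. This decides the instance, completing the equivalence.

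The step I expect to be the main obstacle is pinning down the exact adversarial view so that both reductions are well defined on the \emph{same} instances: I must argue that the keys appearing in any $IMSI$ $Security$ $Problem$ instance are precisely those stored in the system and that the intra-table linkages $(\text{ID} \leftrightarrow k(A_i))$ and $(k(B_j) \leftrightarrow \text{location})$ are genuinely available to A under the semi-honest storage assumption. Once the view is fixed and the ``same IMSI iff same customer'' observation is in place, both directions collapse to routine lookups, and the payoff of the theorem is that it reduces the entire security analysis to the hardness of a single combinatorial problem.
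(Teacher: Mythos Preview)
Your proposal is correct and follows essentially the same two-reduction strategy as the paper: the $\Leftarrow$ direction iterates the solver over all KeyA/KeyB pairs to recover the match, and the $\Rightarrow$ direction uses the assumed ID--location correspondence together with the intra-table linkages to decide any given instance. The only cosmetic difference is that in the $\Rightarrow$ direction the paper routes through $k(B_j)\to\text{location}\to\text{ID}\to k(A_j)$ and then tests $k(A_j)=k(A_i)$, whereas you look up both endpoints and test the correspondence directly; logically these are the same argument.
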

\begin{proof}
Assume that adversary A can find the correspondence between ID and location.  Given the location associated with  $k({B_{j}})$, adversary A can know its corresponding  ID of the customer. By eavesdropping the link between the device and the certifying server, adversary A can get the customer's $k({A_{j}})$.
If $k({A_{j}})=k({A_{i}})$, the answer for  $IMSI$ $Security$ $Problem$ is ``YES"; otherwise, the answer is ``NO". By contrast, assume an adversary A can solve $IMSI$ $Security$ $Problem$. For any given $k({B_{j}})$ and its corresponding location $l$, adversary A can answer $IMSI$ $Security$ $Problem$ for each KeyA in the database records.  If  the answer is ``YES"  for a certain KeyA with ID $d$, then $d$ is located at $l$.
\end{proof}

To reduce the possibility of $IMSI$ $Security$ $Problem$ to be solved, the hardness assumption of the Hash function used as the key generation function is given as follows: It is computationally infeasible in determining whether two output Hash values lead to the fact that  part of two corresponding input messages are the same.

\subsection{Analysis of Network Coding based Pseudonym}
Now we evaluate our security framework for which network coding techniques are used to encrypt the customer's IMSI.
Referring to $Theorem$ $1$, the perfect privacy protection is to ensure the
probability of $IMSI$ $Security$ $Problem$ being solved is equal to  the probability of random guessing.
Next, we give the definition that the considered LBS system is $secure$.
\newtheorem{definition}{Definition}
\begin{definition}
The system is $secure$ if $|\Pr\{$An adversary correctly solves $IMSI$ $Security$ $Problem$$\} - \frac{1}{2}|\ \leq  \varepsilon$, where $\varepsilon$ is $negligible$. Note that a function $f:\mathbb{N} \rightarrow \mathbb{R} $ is called $negligible$ if for every polynomial $p$ there exists a positive integer $N(p)$ such that $|f(n) \leq \frac{1}{p(n)}|$ for all $n \geq N(p)$.
\end{definition}
\newtheorem{lemma}{Lemma}
\begin{lemma}
Assume an LBS system generate $\{k({A_{i}}),k({B_{j}})\}$.
Let $A[k({A_{i}}),k({B_{j}})]$ be the guess of adversary A. If
\begin{equation}
Pr\{A[k({A_{i}}),k({B_{j}})] = Yes|IMSI_{i}=IMSI_{j}\} \leq \frac{1}{2} + \varepsilon_{1}
\enspace \label{lemma1.1}
\end{equation}
and
\begin{equation}
Pr\{A[k({A_{i}}),k({B_{j}})] = No|IMSI_{i}\neq IMSI_{j}\} \leq \frac{1}{2} + \varepsilon_{2}
\enspace, \label{lemma1.2}
\end{equation}
where $\varepsilon_{1}$ and $\varepsilon_{2}$ are $negligible$, then the system is $secure$.
\end{lemma}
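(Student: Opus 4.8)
The plan is to evaluate the adversary's overall success probability directly, conditioning on the ground truth of whether the two identities coincide, and then to apply the two hypotheses term by term. Write $p = \Pr\{IMSI_i = IMSI_j\}$ for the prior probability that the queried pair is a genuine match, and abbreviate $P_1 = \Pr\{A[k(A_i),k(B_j)] = \text{``Yes''}\mid IMSI_i = IMSI_j\}$ and $P_2 = \Pr\{A[k(A_i),k(B_j)] = \text{``No''}\mid IMSI_i \neq IMSI_j\}$, so that the two hypotheses read $P_1 \le \tfrac12 + \varepsilon_1$ and $P_2 \le \tfrac12 + \varepsilon_2$. Since the adversary solves the $IMSI$ $Security$ $Problem$ correctly exactly when it answers ``Yes'' on a true match or ``No'' on a true mismatch, the law of total probability gives
\begin{equation}
\Pr\{\text{A correct}\} = P_1\, p + P_2\,(1-p) .
\end{equation}
This isolates precisely the two conditional quantities that the lemma controls.

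Next I would substitute the two bounds into this identity. Because $p$ and $1-p$ are nonnegative and sum to one, the right-hand side is a convex combination, so
\begin{equation}
\Pr\{\text{A correct}\} \le \tfrac12 + \varepsilon_1\, p + \varepsilon_2\,(1-p) \le \tfrac12 + \max\{\varepsilon_1,\varepsilon_2\} .
\end{equation}
It then remains to observe that a convex combination (indeed the maximum) of two $negligible$ functions is again $negligible$: for any polynomial $q$ each of $\varepsilon_1,\varepsilon_2$ is eventually bounded by $\tfrac{1}{2q(n)}$, hence so is their maximum. Setting $\varepsilon = \max\{\varepsilon_1,\varepsilon_2\}$ therefore supplies the upper half of the bound demanded by the $secure$ definition.

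The delicate point is that the $secure$ definition is two-sided, requiring $|\Pr\{\text{A correct}\} - \tfrac12| \le \varepsilon$, whereas the hypotheses furnish only \emph{upper} bounds on $P_1$ and $P_2$, which by themselves cannot exclude $\Pr\{\text{A correct}\} < \tfrac12 - \varepsilon$. I expect this lower bound to be the main obstacle, and I would close it by a symmetry (answer-flipping) argument: under the standard convention that the hypotheses hold for every admissible adversary, the complementary adversary $\bar{A}$ that swaps every ``Yes'' and ``No'' also satisfies them, and its conditional success probabilities are $1-P_1$ and $1-P_2$; applying the two inequalities to $\bar{A}$ converts the upper bounds into the matching lower bounds $P_1 \ge \tfrac12 - \varepsilon_1$ and $P_2 \ge \tfrac12 - \varepsilon_2$, which feed back into the identity to yield $\Pr\{\text{A correct}\} \ge \tfrac12 - \max\{\varepsilon_1,\varepsilon_2\}$. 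Equivalently, one may simply note that a rational adversary never does worse than random guessing, so $\Pr\{\text{A correct}\} \ge \tfrac12$ and the lower half is automatic. Either route completes the two-sided estimate with the same $negligible$ $\varepsilon$, establishing that the system is $secure$.
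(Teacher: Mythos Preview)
Your approach is essentially the same as the paper's: decompose the success probability by total probability according to whether $IMSI_i = IMSI_j$, then plug in the two hypotheses. The paper's proof simply fixes the prior at $p = \tfrac12$ by assuming ``without loss of generality'' a two-customer system, obtaining $\Pr\{\text{A correct}\} = \tfrac12 P_Y + \tfrac12 P_N \le \tfrac12 + \tfrac12(\varepsilon_1 + \varepsilon_2)$ and taking $\varepsilon = \tfrac12(\varepsilon_1 + \varepsilon_2)$; your version with a general prior $p$ and $\varepsilon = \max\{\varepsilon_1,\varepsilon_2\}$ is a mild and harmless generalization.

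Where you actually go beyond the paper is in flagging the two-sided nature of the \emph{secure} definition. The paper's own proof derives only the upper bound $\Pr\{\text{A correct}\} \le \tfrac12 + \varepsilon$ and then asserts the absolute-value inequality without further comment; your answer-flipping symmetry (or the ``rational adversary never underperforms random guessing'' remark) is precisely the missing justification. So your proof is not only aligned with the paper's but slightly more complete.
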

\begin{proof}
Let
\begin{equation}
P_{Y} \triangleq Pr\{A[k({A_{i}}),k({B_{j}})] = Yes|IMSI_{i} = IMSI_{j}\}
\end{equation}
and
\begin{equation}
P_{N} \triangleq \\ Pr\{A[k({A_{i}}),k({B_{j}})] = No|IMSI_{i}\neq IMSI_{j}\} \enspace.
\end{equation}
Without loss of generality, assume an LBS system with only two customers.
We have that
$Pr\{$An adversary correctly solves $IMSI$ $Security$ $Problem$$\} = \frac{1}{2}P_{Y} + \frac{1}{2}P_{N}$
$\leq \frac{1}{4} + \frac{1}{2} \varepsilon_{1} + \frac{1}{4} + \frac{1}{2} \varepsilon_{2}$.
Thus, it follows that $|Pr\{$An adversary correctly solves $IMSI$ $Security$ $Problem$$\} - \frac{1}{2}| \leq  \varepsilon$,
where $\varepsilon = \frac{1}{2}(\varepsilon_{1} + \varepsilon_{2})$ is $negligible$.
\end{proof}

Next, we analyze the independence of the IMSI and the generated pseudonyms.
\begin{theorem}
The information mutually held between IMSI and KeyA or KeyB of a customer is
$I(k({A_{i}});IMSI_{i}) = I(k({B_{j}});IMSI_{j}) = 0$, where the mutual information $I()$ is defined by $I(X;Y) = \sum\limits_{y \in Y} {\sum\limits_{x \in X} {p(x,y)} } \log \left( {\frac{{p(x,y)}}{{p(x)p(y)}}} \right)$.
\end{theorem}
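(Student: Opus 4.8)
The plan is to recognize this theorem as a statement of perfect secrecy in Shannon's sense and to prove it by a one-time-pad argument applied separately to each of the two tiers. Since $I(X;Y)=0$ holds if and only if $X$ and $Y$ are statistically independent, it suffices to show that $k(A_i)$ is independent of $IMSI_i$ and that $k(B_j)$ is independent of $IMSI_j$. Writing $I(X;Y)=H(X)-H(X\mid Y)$, the target is equivalently the condition $H\bigl(IMSI_i\mid k(A_i)\bigr)=H(IMSI_i)$, i.e. that observing KeyA leaves the entropy of the IMSI unchanged.

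First I would handle the first tier. Recall that $\mathbf{b}$ is the mixing of $IMSI_i$ with the watchword, that the watchword is uniformly distributed over $F_q$ and independent of the IMSI, and that its length is at least that of the IMSI. The mixing in $F_q$ acts as a one-time pad: for every fixed value of $IMSI_i$, as the watchword ranges uniformly the vector $\mathbf{b}$ becomes uniformly distributed, and crucially this conditional law does not depend on the particular value of $IMSI_i$. I would then invoke the invertibility of the Vandermonde matrix $\mathbf{A}$ — guaranteed because the coefficients $a_i$ are distinct nonzero elements of $F_q$ — so that $\mathbf{b}\mapsto\mathbf{c}=\mathbf{A}\mathbf{b}$ is a bijection and therefore carries the uniform distribution to the uniform distribution. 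Selecting a segment of $\mathbf{c}$ is a coordinate projection, which sends a uniform distribution to a uniform distribution and preserves independence from $IMSI_i$. Consequently the conditional law of $k(A_i)$ given $IMSI_i$ coincides with its marginal law, which yields $H\bigl(k(A_i)\mid IMSI_i\bigr)=H\bigl(k(A_i)\bigr)$ and hence $I\bigl(k(A_i);IMSI_i\bigr)=0$.

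For the second tier I would argue analogously. KeyB is obtained by mixing KeyA with a fresh random seed that is uniformly distributed and independent of both KeyA and the IMSI. By the same one-time-pad reasoning, $k(B_j)$ is uniformly distributed and independent of $k(A_j)$; since any dependence of $k(B_j)$ on $IMSI_j$ can only be transmitted through $k(A_j)$, independence from $k(A_j)$ together with the freshness of the seed forces $k(B_j)$ to be independent of $IMSI_j$ as well. This delivers $I\bigl(k(B_j);IMSI_j\bigr)=0$ and completes the claim.

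The main obstacle I anticipate is making the segment-selection step fully rigorous: one has to verify that revealing only a part of $\mathbf{c}$ still yields a distribution that is \emph{exactly} independent of the IMSI rather than merely close to it. This is precisely where the stated requirement that the watchword (and the seed) be at least as long as the IMSI enters — it guarantees that enough fresh randomness is injected to mask every coordinate of the IMSI, so that no residual correlation survives the projection. A careful write-up would also need to pin down the exact mixing operation (field addition in $F_q$) and to confirm that the seed of the second tier is drawn independently of the first-tier randomness, so that the two one-time-pad arguments compose cleanly.
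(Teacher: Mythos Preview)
Your high-level strategy---reduce to independence via a one-time-pad argument---is the right instinct, but the central step does not go through in the paper's model. The ``mixing'' of IMSI and watchword that produces $\mathbf{b}$ is \emph{concatenation}, not field addition: $\mathbf{b}\in F_q^n$ carries the IMSI in $m$ of its coordinates and the (uniform, independent) watchword in the remaining $n-m$. The paper makes this explicit by identifying the IMSI with the subvector $\mathbf{b}^{(m)}$ and computing $I(\mathbf{c}_{p+1:p+k};\mathbf{b}^{(m)})$. Under this model your claim that ``for every fixed value of $IMSI_i$ \ldots the vector $\mathbf{b}$ becomes uniformly distributed'' is false: conditioning on the IMSI pins down $m$ coordinates of $\mathbf{b}$, and the bijection $\mathbf{b}\mapsto\mathbf{A}\mathbf{b}$ then sends the conditional law to an $(n-m)$-dimensional affine flat in $F_q^n$, not to the full uniform distribution. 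So the subsequent ``projection of a uniform vector is uniform'' step is not available, and indeed if $\mathbf{b}$ really were uniform given the IMSI, the length condition you invoke at the end would be irrelevant.

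The substantive work---exactly the obstacle you flag---is to show that projecting that affine flat onto the $k$ selected coordinates of $\mathbf{c}$ is still surjective, so that the key is uniform for every fixed IMSI value. The paper does this by row-reducing the block $\mathbf{A}_{p+1:p+k}$ via Gaussian elimination to a form with an embedded $\mathbf{I}_k$, and then running the entropy chain rule to obtain $H(\mathbf{b}^{(m)}\mid\mathbf{c}_{p+1:p+k})=mH(b)$ precisely when $m\le n-k$; the design choice $k=n/2$, $m\le n/2$ then forces the mutual information to vanish. Your plan can be repaired along the same lines by arguing directly that the $k\times(n-m)$ submatrix of $\mathbf{A}$ acting on the watchword coordinates has full row rank $k$ (any $k$ columns of $k$ consecutive Vandermonde rows are independent when the $a_j$ are distinct and nonzero), but as written the one-time-pad shortcut bypasses the one place where both the Vandermonde structure and the length hypothesis are genuinely used.
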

\begin{proof}
Let $\mathbf{e}^{(h)}$ represent a subset containing arbitrary $h$ components of vector $\mathbf{e}$. Denote $\mathbf{e}_{i:j}$ as the subvector formed from the $i$-th to the $j$-th position of vector $\mathbf{e}$. The set of the $i$-th and the $j$-th rows of matrix $\mathbf{D}$ is represented as $\mathbf{D}_{i:j}$. Assume that the length of IMSI be $m$ and the length KeyA and KeyB be $k$.
Represent $\mathbf{b}$ as the uncoded data (e.g., the mixing of IMSI and watchwords), where $b_{i}$ are uniformly distributed independent random variables over $\mathbf{F_{\emph{q}}}$ with entropy $H(b_{i}) = H(b)$, and $H()$ is defined by $H(X) =  - \sum\limits_{i = 1}^n {p({x_i})\log } p({x_i})$.
For simplicity,  the key is selected from arbitrary contiguous $k$ components of $\mathbf{c} = (c_{1},\cdots,c_{n})^{T} = \mathbf{Ab}$.
The mutual information $I(\mathbf{c}_{p+1:p+k};\mathbf{b}^{(m)})$ for $0 \leq p \leq n-k$ can be calculated as:
\begin{eqnarray}
&& I(\mathbf{c}_{p+1:p+k};\mathbf{b}^{(m)}) \nonumber \\
&&= I(\mathbf{b}^{(m)};\mathbf{c}_{p+1:p+k}) \nonumber \\
&&= H(\mathbf{b}^{(m)}) - H(\mathbf{b}^{(m)}|\mathbf{c}_{p+1:p+k})
\enspace, \label{entropy}
\end{eqnarray}
where
\begin{align}
& H(\mathbf{b}^{(m)}|\mathbf{c}_{p+1:p+k})  & \nonumber \\
&= \sum_{j=1}^m H(b_{seq(j)}|   \mathbf{c}_{p+1:p+k}, & \nonumber \\
&\hspace{30pt} b_{seq(j-1)} = \check{b}_{seq(j-1)},\cdots,b_{seq(1)} = \check{b}_{seq(1)} )
\enspace, \label{extend_entropy}
\end{align}
\begin{figure*}
\begin{align}
{\mathbf{M}}_{{\text{p + 1:p + k}}}  = \left[ {\begin{array}{*{20}c}
   {m_1^p } & {...} & {m_p^p } & 1 & {...} & 0 & {m_{p + k + 1}^p } & {...} & {m_n^p }  \\
    \vdots  &  \vdots  &  \vdots  & 0 &  \ddots  &  \vdots  &  \vdots  &  \vdots  &  \vdots   \\
   {m_1^{p + k - 1} } & {...} & {m_p^{p + k - 1} } & 0 & {...} & 1 & {m_{p + k + 1}^{p + k - 1} } & {...} & {m_n^{p + k - 1} }  \\
 \end{array} } \right]
\enspace \label{bigmatrix}
\end{align}
\end{figure*}
In (\ref{extend_entropy}) $seq(j)$ is the $j$-th element of a random integer sequence within the range $1$ to $n$, and $\check{b}_{i}$ represents the given value for the random variable $b_{i}$.
Since the $n\times n$ Vendermonde matrix $\mathbf{A}$ is nonsingular \cite{Klinger_21}, we can apply the Gaussian elimination to obtain the reduced row echelon form of the submatrix $\mathbf{S}$, in $[S_{i,j}] = [A_{i,j}]$, $p+1 \leq i,j \leq p+k$.
Then \textit{Adversary Reduced Matrix} $\mathbf{M}$ is obtained as (\ref{bigmatrix}), 
where the other element of $\mathbf{M}$ are the same as $\mathbf{A}$.
We can also obtain \textit{Adversary Reduced vector} $\mathbf{v}$ of which the elements
$v_{p + 1} \cdots v_{p + k}$ are represented as
\begin{align}
{\mathbf{v}}_{{\text{p + 1:p + k}}}  = \left[ {\begin{array}{*{20}c}
   {v_{p + 1} }  \\
    \vdots   \\
   {v_{p + k} }  \\
 \end{array} } \right] \enspace.
\end{align}
Each element in $\mathbf{M}$ and $\mathbf{v}$ results from the basic row operations to reduce $\mathbf{S}$ to $\mathbf{I_{k}}$. Thus, we obtain $k$ equations to solve $n$ unknown elements $b_{i}$:
\begin{align}
{{v}}_{{i}} = \sum\limits_{j = 1}^p {m_j^{i - 1} b_j }+ b_i+\sum\limits_{j = p + k + 1}^n {m_j^{i - 1} b_j } \enspace,
\end{align}
where $p + 1 \leq i \leq p + k $.
Since we cannot solve any $b_{i}$ without  $n-k$ components of $\mathbf{b}$ for $1\leq j\leq n-k$, it follows that
\begin{align}
&H(b_{seq(j)}|\mathbf{c}_{p+1:p+k},b_{seq(j-1)} = \check{b}_{seq(j-1)},\cdots,b_{seq(1)} = \check{b}_{seq(1)} ) \nonumber \\
&= H(b_{seq(j)}) \nonumber \\
&=H(b)
\enspace. \label{H_equation_equal}
\end{align}
For $n-k\leq j\leq m$, the number of equations is more than the number of unknown elements.
Thus, we obtain
\begin{align}
&H(b_{seq(j)}|\mathbf{c}_{p+1:p+k},b_{seq(j-1)} = \check{b}_{seq(j-1)},\cdots,b_{seq(1)} = \check{b}_{seq(1)} ) \nonumber \\
&=0
\enspace. \label{H_equation_equal_0}
\end{align}
Substituting (\ref{H_equation_equal}) and (\ref{H_equation_equal_0}) into (\ref{extend_entropy}), we have
\begin{eqnarray}
H({{\mathbf{b}^{\left( m \right)}}|{\mathbf{c}_{p + 1:p + k}}})\  = \left\{\begin{array}{ll}
mH(b)\enspace , & \mbox{$m\leq n-k$} \\
(n - k)H\left( b \right) \enspace ,& \mbox{$m>n-k$}~
\end{array} \right.. \label{both equation}
\end{eqnarray}
Since $b_{i}$ are i.i.d random variables, it follows that
\begin{eqnarray}
H({{\mathbf{b}^{\left( m \right)}}})&=& H\left( {{b_{seq(1)}},{b_{seq(2)}}, \ldots ,{b_{seq(m)}}} \right) \nonumber \\
&=& mH(b)  \enspace. \label{right equation}
\end{eqnarray}
Finally, substituting (\ref{both equation}) and (\ref{right equation}) into (\ref{entropy}), we can obtain
\begin{equation}
I\left( {{\mathbf{c}_{p + 1:p + k}};{\mathbf{b}^{\left( m \right)}}} \right) = \left\{\begin{array}{ll}
0 \enspace , & \mbox{$m\leq n-k$} \\
(m - n + k)H\left( b \right) \enspace ,& \mbox{$m>n-k$}~
\end{array} \right..
\end{equation}
Note that we select $k=n/2$ and $m\leq n/2$ in our network coding based pseudonym scheme to ensure
$I\left( {{\mathbf{c}_{p + 1:p + k}};{\mathbf{b}^{\left( m \right)}}} \right) = 0$.
\end{proof}

According to $Theorem$ $2$, we know that KeyA or KeyB are independent of IMSI.
As a result, the probability for an adversary to correctly solve $IMSI$ $Security$ $Problem$ is the same as random guessing. This is implied that
\begin{eqnarray}
&&{\text{Pr}}\left\{ {{\text{A}}\left[ {{\text{k}}\left( {{{{A_i}}}} \right),{\text{k}}\left( {{{{B_j}}}} \right)} \right] = Yes | {\text{IMS}}{{\text{I}}_i} = {\text{IMS}}{{\text{I}}_j}} \right\}  \nonumber \\
&=& {\text{Pr}}\left\{ {{\text{A}}\left[ {{\text{k}}\left({{{{A_i}}}} \right),{\text{k}}\left( {{{{B_j}}}} \right)} \right] = Yes} \right\} \nonumber \\
&=& \frac{1}{2} \nonumber \\
&\leq& \frac{1}{2} + {\varepsilon _1}
\end{eqnarray}
and
\begin{eqnarray}
&&{\text{Pr}}\left\{ {{\text{A}}\left[ {{\text{k}}\left( {{{{A_i}}}} \right),{\text{k}}\left( {{{{B_j}}}} \right)} \right] = No|{\text{IMS}}{{\text{I}}_i} \ne {\text{IMS}}{{\text{I}}_j}} \right\} \nonumber \\
&=& {\text{Pr}}\left\{ {{\text{A}}\left[ {{\text{k}}\left( {{{{A_i}}}} \right),{\text{k}}\left( {{{{B_j}}}} \right)} \right] = No} \right\}  \nonumber \\
&=& \frac{1}{2} \nonumber \\
&\leq& \frac{1}{2} + {\varepsilon_2}  \enspace.
\end{eqnarray}
Based on $Lemma$ $1$, we conclude that the proposed network coding scheme can achieve the unconditional security level.

\section{Discussion}
In this section, we discuss the security issues in terms of privacy, authentication, and continuity for the proposed group LBS scheme. Then, we compare the security performance of the proposed pseudonym scheme with that of traditional pseudonym schemes.

\subsection{Data Ownership Privacy}

As shown in the previous section, network coding can provide unconditional security rather than computational security provided by Hash functions.
Note that the unconditional security property of the proposed IMSI pseudonym is contributed from the uncertainty of the customer's watchword.
Therefore, the privacy of location data ownership is fully preserved even if one can eavesdrop IMSI, KeyA, or KeyB.
This property ensures that the proposed pseudonym scheme can prevent the ownership information from being accessed by eavesdroppers.
For the same reason, our system is collusion-resistant to the insider attack even when a service provider and a cloud database provider collaboratively attempt to decrypt the pseudonyms.
Furthermore, it is discussed that the connection between a pseudonym and its ownership information can be derived by analyzing messages (e.g., location and group member information) subsequently \cite{narayanan2009anonymizing}.
In the proposed IGS algorithm, customers' keyB is exchanged randomly within the same group after a silence period, thereby increasing the difficulty of de-anonymization.
Compared to the approach using a new KeyB, the proposed method can confuse the eavesdropper because the customers in the same group have similar interests and movement patterns.
The silence period is a design parameter specified by a random variable within a certain range \cite{Leping_15}.
When the durations of two customers' KeyBs are overlapped, the temporal relation of these two customers cannot be correctly linked from the new KeyB and the old KeyB.
In general, a longer silence period results in lower location timeliness (i.e., the elapsed time since the location was acquired) but provides higher privacy protection.
Thus, a cloud database cannot store the precise locations because they are strongly associated with customers.
For example, if a customer usually stays in one place, we can guess that he/she may own the house.
Fortunately, many group LBS do not require precise locations  for each customer.
At last, we notice that the collision issue (i.e., two or more customers having the same KeyA/KeyB) will not happen in our proposed scheme since the certifying server will check the validity of each key in the login process.

\subsection{Customer Authentication}
With the help of the authentication center of cellular mobile service operators, the proposed IMSI pseudonym can prevent illegal access which will be blocked in the login process.
In the proposed IMSI pseudonym, a customer generates KeyA based on his/her own IMSI.
A customer can update location information only if he/she is an authorized customer.
Even if an adversary can steal a customer's SIM card to get IMSI, this adversary still cannot log in to the system without the customer's watchword.

\subsection{Service Continuity}
In group LBS, service continuity is important because the system provider can record and analyze the historical location data of a customer.
The proposed IMSI-based pseudonym can retain the individual identification of a customer's pseudonyms because the unique IMSI is used in pseudonym generation.
Therefore, a customer can enquire his/her location record by providing KeyA to the certifying server.
In addition, historical location can be used for social data mining services.
In contrast, most of conventional random pseudonym approach will eliminate the individual identification of pseudonyms. Therefore, new location data cannot be appended to the same historic records when users change their identifications, reinstall the program, or change devices.
In this case, location records become difficult to be traced since the ownership information of a pseudonym is hard to be known.
With the uniqueness provided by IMSI, the proposed privacy protection scheme can be operated and retain the service continuity simultaneously.

\subsection{Comparison with Conventional Pseudonym Schemes}
Compared to conventional random pseudonym approaches, the proposed network coding based pseudonym has the following advantages.
First, as mentioned above, the proposed IMSI pseudonym can further improve data ownership privacy, customer authenticity, and service continuity.
Especially, the proposed pseudonym scheme can provide unconditional security rather than the computational security of Hash-based pseudonym.
The property of unconditional security offers great robustness against brute force attacks in cloud computing environment.
Besides, the designed two-tier coding method can reduce the computational complexity and energy consumption in mobile devices compared to Hash schemes.
A customer only needs to encode its identity one time when logging in to the system.
No further decoding procedures are required.
Moreover, the designed certifying server can authenticate and authorize individual customers.
Note that the certifying server handles the key collision and uncloaks the pseudonym only according to one's keyB.
Thus, the certifying process can be implemented in a distributed manner to avoid the service bottleneck and the single point of failure in the system \cite{zhang2011cross,li2011distributed}.
Last but not least, the proposed network coding based pseudonym is compatible with the existing pseudonym approaches and security communication protocols such as IP security (IPSec) and secure socket layer (SSL).

\section{Experimental Results}  \label{Experiment RESULTS}

\begin{figure}
\centering
\includegraphics[width=0.5\textwidth]{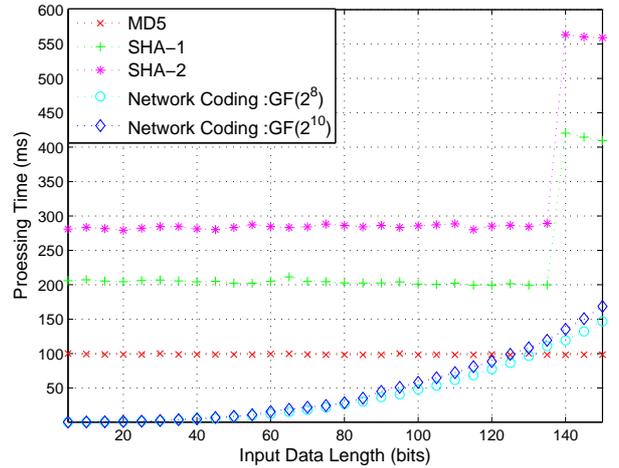}\\
\caption{Processing time for different key generation schemes.} \label{delay}
\end{figure}

Since mobile devices should perform key generation when customers log in to the LBS or update their locations, processing delay and power consumption for key generation function are important performance issues.
The applications of network coding may be limited because of the computational complexity and energy consumption in mobile devices \cite{lee2013understanding}.
To examine the delay performance and the energy consumption of the proposed network coding scheme, we implemented the JOIN services and performed experiments based on HTC Desire with a Qualcomm QSD8250, which is a low-power ARM based 1 GHZ processor. We choose the three most widely used Hash functions (i.e., MD5,  SHA-1, and SHA-2), and network coding schemes with different Galois field sizes.

\begin{figure}
\centering
\includegraphics[width=0.5\textwidth]{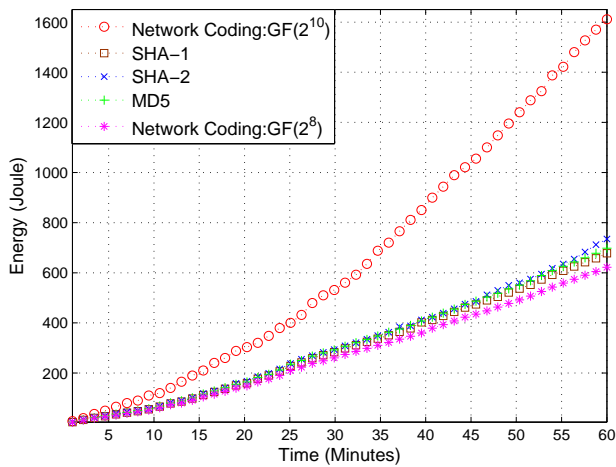}\\
\caption{Energy consumption of smartphones for different key generation schemes.} \label{power}
\end{figure}

Figure~\ref{delay} shows the processing time versus the length of IMSI for different key generation schemes. One can observe that the processing time of SHA increases sharply when the length of IMSI is around $140$ bits.
This is because input data are broken down into chunks for processing in SHA and MD5. Chunks of input data will be appended to fit in with the chunk size.
Also, the processing time of network coding approach is not highly affected by Galois field size because the Galois field mathematical operations are always in the range of field size rather than growing exponentially.
Finally, it is important to note that the processing time of network coding is shorten than that of Hash functions when the IMSI length is shorter than 120 bits. In the existing communication systems, a standard IMSI is usually 50 bits long or even shorter. Our results show that using network coding to encrypt a standard IMSI can reduce more than 95\% of processing time in mobile devices compared to using Hash functions.

Figure~\ref{power} shows the system energy consumption versus running time for different key generation schemes. The system energy, including both coding energy and transmission energy, is tested with eight packets coded together, each of one KB length, and a transmission energy per bit of 200 pJ/bit \cite{angelopoulos_11}. Paper \cite{angelopoulos_11} has identified when a small field size is used, the total system energy consumption is dominated by the extra RF retransmissions. In contrast, as field size becomes large, significantly increased energy is required for performing the coding process with less influence on the expected number of transmitted packets.
Thus, it is worthwhile to investigate the optimal field size of the proposed network coding scheme in terms of energy consumption. As shown in Fig.~\ref{power}, although network coding with field size ${2^{10}}$ consumes more energy  compared to MD5 or SHA, network coding with field size ${2^{8}}$ can reduce about 10\% of the energy consumption compared to MD5 or SHA. Hence, we can conclude that network coding with field size ${2^{8}}$ is a promising function for energy-limited devices.
Finally, it is worthwhile mentioning that the proposed scheme can provide unconditional security with any network coding parameters.

\section{Conclusions} \label{Conclusion}

In this paper, we proposed a lightweight network coding pseudonym scheme to protect the ownership of mobile data in an untrusted cloud database by disconnecting the relation of data and its identity.
We develop the IMSI-based group secure (IGS) algorithm for group LBS based on the proposed pseudonym scheme.
Our theoretical results show that the proposed scheme can achieve unconditionally secure privacy protection.
Our experimental results indicate that the proposed network coding based pseudonym cannot only exhibit better delay performance, but also provide lower energy consumption compared to Hash-based pseudonyms. Our proposed scheme is fully compatible with existing data security techniques. This work is the first step to exploit the potential of network coding in providing secure pseudonym. Besides the LBS, we will further investigate the applicability and limitation of network coding based pseudonym in the applications of machine-to-machine (M2M) communications.

\bibliography{Reference}
\bibliographystyle{IEEEtran}

\vspace{-3cm}
\begin{biography}[{\includegraphics[width=1in,height
=1.1in,clip,keepaspectratio]{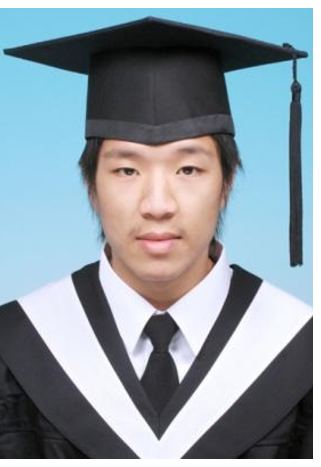}}]{Yu-Jia Chen}
received the B.S. degree and Ph.D. degree in electrical engineering from National Chiao Tung University, Taiwan, in 2010 and 2016, respectively. He is currently a postdoctoral fellow in National Chiao Tung University. His research interests include network coding for secure storage in cloud datacenters, software defined networks (SDN), and sensors-assisted applications for mobile cloud computing. Dr. Chen has published 15 conference papers and 3 journal papers. He is holding one US patent and two ROC patent.
\end{biography}

\vspace{-3cm}
\begin{biography}[{\includegraphics[width=1in,height
=1.1in,clip,keepaspectratio]{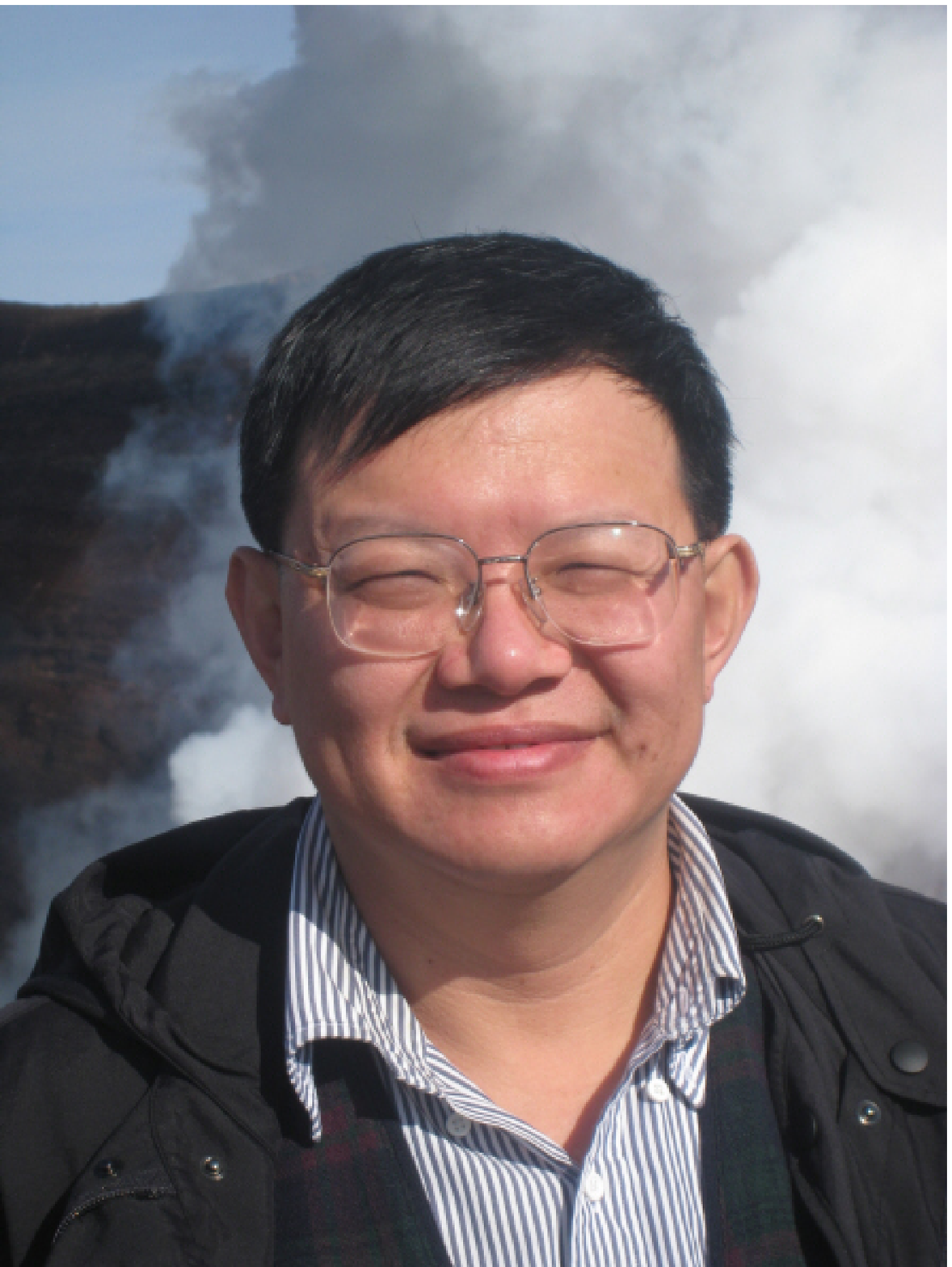}}]{Li-Chun Wang
(M'96 -- SM'06 -- F'11)} received the B.S. degree from National Chiao Tung University, Taiwan, R.O.C. in 1986, the M.S. degree from National Taiwan University in 1988, and the Ms. Sci. and Ph. D. degrees from the Georgia Institute of Technology , Atlanta, in 1995, and 1996, respectively, all in electrical engineering.

From 1990 to 1992, he was with the Telecommunications Laboratories of Chunghwa Telecom Co. In 1995, he was affiliated with Bell Northern Research of Northern Telecom, Inc., Richardson, TX. From 1996 to 2000, he was with AT\&T Laboratories, where he was a Senior Technical Staff Member in the Wireless Communications Research Department.  Since August 2000, he has joined the Department of Electrical and Computer Engineering of National Chiao Tung University in Taiwan and is the current Chairman of the same department. His current research interests are in the areas of radio resource management and cross-layer optimization techniques for wireless systems, heterogeneous wireless network design, and cloud computing for mobile applications.

Dr. Wang won the Distinguished Research Award of National Science Council, Taiwan in 2012, and was elected to the IEEE Fellow grade in 2011 for his contributions to cellular architectures and radio resource management in wireless networks. He was a co-recipient(with Gordon L. Stuber and Chin-Tau Lea) of the 1997 IEEE Jack Neubauer Best Paper Award for his paper ``Architecture Design, Frequency Planning, and Performance Analysis for a Microcell/Macrocell Overlaying System," IEEE Transactions on Vehicular Technology, vol. 46, no. 4, pp. 836-848, 1997. He has published over 200 journal and international conference papers. He served as an Associate Editor for the IEEE Trans. on Wireless Communications from 2001 to 2005, the Guest Editor of Special Issue on "Mobile Computing and Networking" for IEEE Journal on Selected Areas in Communications in 2005, "Radio Resource Management and Protocol Engineering in Future Broadband Networks" for IEEE Wireless Communications Magazine in 2006, and "Networking Challenges in Cloud Computing Systems and Applications," for IEEE Journal on Selected Areas in Communications in 2013, respectively. He is holding 10 US patents.
\end{biography}

\end{document}